\documentclass{article}
\usepackage{float}
\usepackage{authblk}

\title{Quantum Money Generated by Multiple Untrustworthy Banks}
\author[1]{Yuichi Sano\thanks{\texttt{sano.yuichi.77v@st.kyoto-u.ac.jp}}} 
\affil[1]{Department of Nuclear Engineering, Kyoto University, Nishikyo-ku, Kyoto 615-8540, Japan}
\date{\today}

\usepackage{graphicx}
\usepackage{amsmath,amsthm,amssymb}
\usepackage{amsfonts}
\usepackage{siunitx}
\usepackage{listliketab}
\usepackage{qcircuit}
\usepackage{braket}
\usepackage{cite}
\usepackage{comment}
\usepackage{physics}

\theoremstyle{definition}
\newtheorem{thm}{Theorem}
\newtheorem{defn}[thm]{Definition}
\newtheorem{lemma}[thm]{Lemma}
\newtheorem{ass}[thm]{Assumption}

\begin{document}

\maketitle

\begin{abstract}
While classical money can be copied, it is impossible to copy quantum money in principle, with only the bank that issues it knowing how to generate it, meaning only the bank can make exact copies.
Not all reliable banks, such as central banks, will issue quantum money, so there is the possibility that untrustworthy banks are distributing fake or multiple copies of the same quantum money without the users' knowledge. 
As such, we propose a {\it quantum patchwork money} scheme in which banks cannot distribute exact copies to users. 
This scheme involves multiple banks providing public-key quantum money as shards and generating quantum patchwork money by combining them.
The banks can use the quantum patchwork money without completely trusting the other banks.
In addition, nonbank users can use safely the quantum patchwork money without trusting any banks potentially focused on self-interest by adding a protocol for monitoring the distribution of copies.
\end{abstract}

\section{Introduction}
Many new applications that involve combining quantum physics and computer science have recently been proposed.
In terms of algorithms, Shore and Grover proposed algorithms for quantum computers that are faster than those for classical computers\cite{Shor,Grover}.
Blind computation is another application unique to quantum computation---one that allows users to delegate calculations for the cloud server without telling the server what the users want to calculate\cite{BFK,Mahadev}.
In the field of cryptography, quantum key distribution is known for its information-theoretic security, which cannot be achieved via cryptography using classical communication\cite{BB84,E91}, while quantum money that is copy-resistant in principle is another important application\cite{Wiesner,BBBW,Aaronson,Farhi,Lutomirski,AaC,QuantumLightning,Qbitco,One-Shot Signatures}.
In this paper, rather than propose a scheme for generating quantum money, we propose an unprecedented application using quantum money that can be used without trusting banks.

\subsection{Classical Money and Quantum Money}
Various tokens have been used in place of money in the past.
Since money can be exchanged for other goods, it must have certain properties that cannot be easily copied.
In the past, precious metals that were difficult to duplicate were chosen as the material for money, while at present, various physical tokens and electronic information are often used as instead.
Examples of physical tokens include the paper money and coins issued by central banks, which are designed to be copy-resistant; however, it is almost impossible to completely prevent copying.
Meanwhile, examples of tokens that use electronic information are credit cards and bank accounts, which again are not completely resistant to copying. 
Therefore, the banks and credit companies that mediate electronic information must be trustworthy institutions.

Quantum money, which, unlike classical money, is copy-resistant in principle, was first proposed by Wiesner\cite{Wiesner}.
The no-cloning theorem in quantum mechanics states that arbitrary quantum states cannot be copied, and Wiesner's quantum money has been proven to have a copy success probability of $(3/4)^n$\cite{wies-sec}.
However, this quantum money has to be checked each time by the bank that issued it to ensure that the token is correct.
A number of so-called ``public-key'' quantum money schemes have been proposed, which allow the users to verify that the token is correct without having to contact the bank\cite{Aaronson,Farhi,Lutomirski,AaC,QuantumLightning,Qbitco,One-Shot Signatures}.
While public-key quantum money is largely copy-resistant, the cryptographic part is based on computational difficulty assumptions.
Quantum money that cannot be copied is a vital quantum application.

While the above are examples of monetary forms that are issued by banks, Bitcoin is decentralized money involving the use of the blockchain technology proposed by Nakamoto\cite{Nakamoto} that are not issued by banks and are not intrinsically copy-resistant since they are essentially tokens of electronic information.
In addition, Bitcoin transactions are not brokered by a trusted bank.
Instead, a distributed ledger based on the blockchain can be used to trace the movement of all money.
By having a third party known as a ``miner'' to monitor the transactions, the use of multiple copies for trading can be prevented. 
Therefore, the Bitcoin scheme does not require the trust of banks or intermediaries, but it does involve huge computational resources.

\subsection{Aims and Objectives}
The aim of this research is to propose a money scheme that can be used without trusting the banks that issue the money, which generally occupy a privileged position.
If the user does not have the ability to verify the money, they will not be able to tell when the bank gives them counterfeit money.
Banks are also free to make exact copies of the money, which applies to both classical and quantum money.
While quantum money cannot be effectively copied by an outsider, the bank knows how to generate it and can make copies at any time.

Most institutions that issue money, such as central banks, are trustworthy.
However, there are other items besides legal tender that people expect to have the same properties as money, such as stocks, bonds, art, and trading card game's(TCG) cards, which require {\it completeness}, {\it soundness}, and {\it copy-resistance} and, yet, much unlike legal tender, are not always dealt with by trustworthy institutions.

What would happen if users used a token issued by an untrustworthy bank?
For example, the bank may issue tokens with the same index as a token held by someone the bank does not like, essentially making them worthless. 
Even if the bank promises to issue a limited number of tokens, it may issue and sell more tokens than promised.
Furthermore, the bank may issue a fake token and pass it on to users as a correct token.
Given that these issues cannot be effectively addressed using the current schemes, our aim is to develop a scheme wherein users can use tokens safely without trusting banks.

\subsection{Contributions}
The two main contributions of this study are as follows:
\begin{itemize}
\item A {\it quantum patchwork money} scheme is proposed in which several untrustworthy banks provide shards to each other to issuein view of issuing quantum money, and. Here the banks that participate in the issuance can use the quantum patchwork money securely without trusting the other banks.
\item The quantum patchwork money scheme is extended to allow nonbank users to use quantum money securely without trusting the banks under the condition of multiple banks issuing quantum money while prioritizing their own interests.
\end{itemize}

\subsubsection{Security}
The following two security criteria are defined for the safe use of money issued by untrustworthy the banks.
\begin{defn}[Informal]
Money that meets the following conditions is referred to as money that can be used without trusting banks.
\begin{enumerate}
   \item The banks cannot offer fake money to any users.
   \item The banks cannot circulate copies of money without the users' permission.
\end{enumerate}
\end{defn}

Here, the first condition is important since untrustworthy banks can potentially offer users fake tokens. 
For example, if a user is given a fake painting by a bank, the user will not be able to reject it if they do not have the ability to judge its authenticity. 
If the user then puts the fake painting up for sale, no one will buy it, and the user will lose money.
In the past, most classical tokens could not be easily verified for correctness; special skills are required for the verification.
However, with some of the current tokens, such as public-key quantum money, users can easily verify their authenticity.

The second condition is crucial to preventing exact copies from being distributed beyond the rules.
For example, if there are two or more bills with the same serial number in circulation, all but one of them will be regarded as counterfeit.
In other cases, the bank may promise to issue a limited number of specific tokens(e.g., TCG cards) but then break its promise and issue more cards. 
Classical tokens complicate the situation since copies of the tokens can be made not only by the bank but also by outsiders.
However, in the case of quantum money, only the bank can generate multiple identical tokens, meaning the focus is on whether the bank is prepared to commit fraud.

A scheme based on public-key quantum money could be beneficial to meeting the above two conditions.
Public-key quantum money can be easily verified as correct by users, and no copies can be made except by the bank.
Therefore, to achieve full security, a scheme is needed to prevent the circulation of copies generated by the banks to anyone other than the users who have consented to this activity.
As such, our scheme is structured in two stages, comprising banks that issue money and users who do not. 
The term ``bank'' is used to define the issuer of money, while the issuer can also be regarded as a ``user'' of quantum money.

\subsubsection{Our Scheme}
In the first scheme, users are limited to the banks.
To start, each bank works with the other banks to generate quantum patchwork money.
At this point, each bank sends its own public-key quantum money to the other banks as a ``shard'' of quantum money before publishing the key corresponding to the shard sent. 
The bank that has received the shards then uses its public key to verify that it is correct. If the bank successfully verifies the shard, it will add additional shards to complete the quantum patchwork money; if it fails to verify the shard, it will ask the bank that generated the failed shard to send a new one.
If the shard demonstrates sufficient completeness, soundness, and copy- resistance, then quantum money completed via this protocol will also have these properties, meaning each shard can only be generated in the bank. 
When the user receives a quantum patchwork money token, they will only accept it if all the included shards can be accepted.
If a bank wishes to copy this completed quantum money token, it will only copy the shard the bank generated, not the whole token. 
Therefore, if a bank wants to make copies of quantum money, it must cooperate with the other banks that have created the shards.
As such, all banks will have to cooperate to make perfect copies, but that will not be a problem since the copies will only be distributed among the users who agree to make copies.
Thus, banks that participate in generating quantum money can use it without trusting the other banks. 
The extended scheme is based on the above scheme but is slightly more complicated since the {\it reissue} protocol is considered.

To extend the scheme, a reissue protocol is introduced. 
Any token can be damaged following frequent use, and provided it is legal tender, the user can take it to a bank to have it exchanged. 
If the token is essentially electronic information, it will be repaired via error correction.
Quantum money is also highly susceptible to errors since it is composed of various quantum states, meaning quantum error correction is a necessary aspect\cite{fault1,fault2,fault3,fault4}.
However, there is the possibility that errors may occur that cannot be addressed via error correction since this method is not perfect.
There is also the possibility of losing one of the shards to another user in the exchange of quantum money.
At such times, the user uses a so-called ``reissue'' protocol to restore the quantum money token.
Here, the user must first indicate to the bank that issued the lost shard that all other shards are correct via quantum communication.
If the bank confirms that all of the other shards are correct, it will issue a new shard with a new public key rather than the original shard.
Because if the bank reissues the original shard, the user can generate multiple identical quantum money by reissuing all the shards.
All users will be aware of the reissue as the public key will also be updated.

It is also assumed that an extended scheme will also involve users who are not banks. 
Since users who are not banks have no privileges regarding quantum money, they will be at a significant disadvantage if they do not place any restrictions on the banks.
\begin{defn}[Informal]
Money that meets the following conditions is referred to as money that can be used without trusting banks that put their own interests first.
\begin{enumerate}
   \item The banks cannot offer fake money to any users.
   \item The banks do not circulate copies of the tokens without the users' permission when the former do not profit from the circulation of quantum money.
\end{enumerate}
\end{defn}
On adopting the second type of security outlined above, we can propose a scheme that allows users to monitor the behavior of the banks and use quantum money without trusting the banks.
The details of the scheme are explained under the assumption that the bank will prioritize its own interests.
The protocol for issuing quantum money is the same as that in the first scheme, as are the banks' security aspects. 
Users need to monitor whether the banks are issuing and circulating copies of quantum money without permission and for it can use the reissue protocol.
This involves updating the public key associated with each piece of quantum money, meaning users possessing copies with updated public keys will become aware that copies are in circulation. 
In other words, users can discover the circulation of copies by performing the reissue protocol with a reasonable probability.
However, this protocol alone cannot prevent banks from issuing copies.
The second scheme is where it becomes essential to demand a fee from the user for reissue and a reserve of quantum money from the bank.
Since banks are limited in their capacity to distribute copies, users can prevent bank fraud by ensuring that the banks' profit gained from reissuing is greater than that gained from fraudulent activity and that the loss incurred through the reserve is larger than the profit gained via the fraud.

\subsection{Comparison With Bitcoin}
While we propose a unique money scheme that users can adopt without trusting their banks, a similar money scheme already exists in the form of Bitcoin\cite{Nakamoto}.
The main difference between our quantum patchwork money scheme and Bitcoin relates to whether banks issue money, while the most significant similarity is that both schemes are largely resistant to copying.
Given that bitcoins are essentially digital information, they can be copied; however, by recording the distribution of all tokens as a blockchain, users can prevent the distribution of copied tokens.

Bitcoin has several disadvantages for its security.
The first is that users must pay fees to miners who monitor the circulation of bitcoins.
Meanwhile, in our scheme, while the user may have to pay a reissue fee for security reasons, the reissue can be free in exceptional cases.
However, after 2140, Bitcoin will only reward miners with fees, so users will always need to pay a fee to trade using bitcoins.
The second disadvantage is that Bitcoin has a limited transaction rate\cite{transaction1,transaction2}, i.e., the transaction must be approved by the miners.
Bitcoin transactions are incorporated into the blockchain by a miner winning a game known as ``proof of work''\cite{transaction2,tran-time}.
Users are encouraged to wait for the game to finish around six times to prevent fraud, meaning around one hour is required to complete the transaction.
In the quantum patchwork money scheme, the money transactions are between the users, meaning there is no limit to the transaction rate other than the physical limit.
In addition, the users know immediately if the transaction will be successful since they only need to use the public- key to verify that the quantum money token is correct.
The third disadvantage is that Bitcoin mining requires high electricity and computing power.
In fact, the amount of electricity consumed in terms of Bitcoin mining is an issue, estimated to have accounted for around $0.5\% -- 1\%$ of the world's electricity consumption in 2021\cite{elec1,elec2}.
The miners are increasing their computational resources and power consumption every year to succeed in mining.
In contrast, the only energy used in our scheme is that used for making, sending, and protecting quantum money.
A further disadvantage is that all Bitcoin transactions are public, and since they are recorded in the blockchain, it has been suggested that the privacy of the users could be breached through examining the transactions\cite{bitco-ano1,bitco-ano2,bitco-ano3}.
However, since our scheme is a quantum money scheme involving users trading with each other locally, much like with legal tender, there is no possibility of being tracked by a third party.

\subsection{Related Works and Open Problems}
The main goal of our proposed scheme is to prevent the circulation of copies of quantum money; however, there exists prior research on preventing the actual generation of quantum money copies\cite{Qbitco}.
The security of not allowing banks to generate copies of quantum money extends the security related to not allowing copies of quantum money to be circulated.
Prior research has focused on using blockchain technology to prevent only one bank from generating quantum money.
However, its scheme imposes a substantial restriction that the banks do not cooperate.
It is a significant open problem whether a quantum money scheme prohibits banks from making copies without imposing substantial restrictions.
Here, it can be stated that quantum lightning\cite{QuantumLightning}, quantum money that can only be created once, and one-shot  signatures\cite{One-Shot Signatures}, which are essentially one-time signatures, offer essential perspectives to resolving this open problem.

Another related topic relates to quantum copy protection\cite{Aaronson,CP1,CP2,CP3,CP4}, an application first proposed in \cite{Aaronson} to prevent users from copying loaned quantum software.
This application can be thought of as a quantum version of copy protection, one that prevents the piracy of classical software and one that may have a synergy with our quantum patchwork money.
Let us consider a quantum software with quantum copy protection instead of a money value counterpart to quantum patchwork money.
The bank uses quantum patchwork money to guarantee that there is only one quantum software in the world.
The user can be sure that there is only one unique quantum software using the quantum patchwork money scheme.
At the same time, the bank can be sure that the quantum software it lends is the only one of its kind since it is guaranteed to not be copied. 
Here, we consider the notion that this idea can be applied to applications other than quantum copy protection.

\section{Preliminaries}
In this section, we introduce the public-key quantum money scheme that forms part of our scheme.

\subsection{Public-Key Quantum Money}
In this paper, we deal with a public-key quantum money scheme with sufficient properties as a ``black box.'' 
The specific structure of the public-key quantum money scheme has been discussed in previorus research\cite{Aaronson,Farhi,Lutomirski,AaC,QuantumLightning}.

First, we must define the properties of {\it money} that must be satisfied.

\begin{defn}[Money]
\label{def:money}
``Money'' is defined in terms of a token that satisfies the following properties:
\begin{description}
    \item[Completeness] A verifier accepts any valid token with  a probability of $1 - \epsilon$, where $\epsilon$ is less than $1/2$.
    \item[Soundness] The verifier only accepts a random incorrect token dependeintg on only the publicly available information with at most a negligible probability of $\lambda$.
    \item[Copy-resistance] The possibility of generating $q$ ($ > p$) tokens accepted by the verifier from $p$ valid tokens using any device that is dependent on only publicly available information is at most a negligible probability of $\delta$.
\end{description}
Money is regarded as having \textit{\textbf{perfect completeness}} when $\epsilon = 0$.
\end{defn}
It should be noted that while our definition of money distinguishes between soundness and copy- resistance, the two properties can be taken together as soundness since copy- resistance is equivalent to soundness when $p = 0$.
In the following sections, we refer to $\epsilon$ in completeness as completeness error, $\lambda$ in soundness as soundness error, and $\delta$ in copy resistance as copy-resistance error.

Here, the public-key quantum token scheme is described with reference to \cite{AaC}.
\begin{defn}[Public-Key Quantum Token Schemes]
A {\it public-key quantum token scheme} $\mathcal{S}$ consists of three polynomial-time quantum algorithms:
\begin{itemize}
    \item {\bf KeyGen}, which takes as input a security parameter $0^n$, and probabilistically generates a key pair $(k_{private}, k_{public})$.
    \item {\bf Bank}, which takes as input $k_{private}$, and probabilistically generates a quantum state $\sigma$ as token.
    \item {\bf Ver}, which takes as input $k_{public}$ and an alleged token $\phi$, and either accepts or rejects.
\end{itemize}
\end{defn}
If the tokens generated by public-key quantum token scheme $\mathcal{S}$ have the properties of money, $\mathcal{S}$ can be regarded as a {\it public-key quantum money scheme}.
\begin{thm}
Public-key quantum token scheme $\mathcal{S}$ can generate tokens as {\it money} if $\mathcal{S}$ has the following properties:
\begin{itemize}
    \item {\bf Ver} accepts with a probability of $1 - \epsilon$ for all $k_{public}$ and valid tokens, where $\epsilon$ is less than $1/2$.
    \item Let {\bf Count} be a function that receives a collection of alleged tokens $\{\phi_1,\ldots,\phi_r\}$ and counts how many alleged tokens the {\bf Ver} accepts.
    For any quantum circuit $\mathcal{C}(k_{public},\sigma_1,\ldots,\sigma_p)$ of size poly(n), the following inequality holds:
    \begin{equation*}
        \text{Pr}[{\bf Count}(k_{public},\mathcal{C}(k_{public},\sigma_1,\ldots,\sigma_p))>p]\leq\delta,
    \end{equation*}
    where $\delta$ is the negligible probability and the probability is over the key $(k_{private}, k_{public})$, valid tokens $\sigma_1,\ldots,\sigma_p$ generated by {\bf Bank}, and the behavior of {\bf Count} and $\mathcal{C}$.
\end{itemize}
\end{thm}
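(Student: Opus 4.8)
The plan is to verify that tokens produced by $\mathcal{S}$ satisfy each of the three properties in Definition~\ref{def:money}---completeness, soundness, and copy-resistance---by invoking the two hypotheses in turn. Completeness is immediate: the first hypothesis states that {\bf Ver} accepts every valid token with probability $1-\epsilon$ for $\epsilon<1/2$, which is verbatim the completeness clause of Definition~\ref{def:money}, so no further argument is needed there, and the same constant $\epsilon$ serves as the completeness error.

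For copy-resistance, I would match the second hypothesis directly against the copy-resistance clause. The key observation is that the event of ``generating $q\,(>p)$ tokens accepted by {\bf Ver}'' out of $p$ valid tokens is exactly the event ${\bf Count}(k_{public},\mathcal{C}(k_{public},\sigma_1,\ldots,\sigma_p))>p$: the circuit $\mathcal{C}$ plays the role of the ``device that is dependent on only publicly available information,'' its inputs $\sigma_1,\ldots,\sigma_p$ are the $p$ valid tokens generated by {\bf Bank}, and {\bf Count} reports how many of its outputs {\bf Ver} accepts. Since the hypothesis bounds this probability by $\delta$ over the joint randomness of the key, the valid tokens, and the circuit, copy-resistance transfers with the same copy-resistance error $\delta$.

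For soundness, I would invoke the special case $p=0$ of the second hypothesis, as the remark following Definition~\ref{def:money} already anticipates. With $p=0$ the circuit $\mathcal{C}(k_{public})$ receives only the public key and no valid token, so every token it emits is produced from publicly available information alone; the event ${\bf Count}(k_{public},\mathcal{C}(k_{public}))>0$ is then ``{\bf Ver} accepts at least one such token.'' Because any procedure that generates a random incorrect token from public data can be realized by such a circuit, the soundness clause is subsumed by this bound, and I would set the soundness error $\lambda=\delta$.

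The main obstacle is not any hard estimate but the careful alignment of the two definitions' quantifiers and sources of randomness. In particular, I must argue that ``any device that is dependent on only publicly available information'' in Definition~\ref{def:money} is faithfully captured by the polynomial-size quantum circuit $\mathcal{C}$ of the hypothesis, and that the soundness adversary's generation of ``a random incorrect token'' is a genuine restriction of the $p=0$ circuit rather than something strictly more powerful; once these identifications are made precise, each of the three properties follows with its stated negligible error, and $\mathcal{S}$ is seen to generate tokens satisfying Definition~\ref{def:money}.
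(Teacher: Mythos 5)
Your proposal is correct and matches the paper's approach: the paper's entire proof is the one-line assertion that the theorem ``is easily verified using Definition~\ref{def:money},'' and your case-by-case matching of the two hypotheses against completeness, copy-resistance, and (via the $p=0$ specialization already flagged in the remark after Definition~\ref{def:money}) soundness is exactly the verification the paper leaves implicit. You simply supply the details the paper omits.
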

\begin{proof}
This is easily verified using Definition \ref{def:money}.
\end{proof}

The completeness error is defined to be less than $1/2$ in Definition \ref{def:money}; however, it is known that the completeness error can be exponentially reduced for polynomial parameter $q(n)$.

\begin{lemma}[Reducing Completeness Error\cite{AaC}]
\label{lem:RCE}
Let $\mathcal{S}$ = ({\bf KeyGen},{\bf Bank},{\bf Ver}) be a public-key quantum money scheme with a completeness error of $\epsilon< 1/2$, and a copy-resistance error of $\delta<1-2\epsilon$ when $p=1$. Then for all polynomials $q(n)$ and all $\delta^\prime>\frac{\delta}{1-2\epsilon}$, we can construct amplified public-key quantum money scheme $\mathcal{S}^\prime = ({\bf KeyGen^\prime},{\bf Bank^\prime},{\bf Ver^\prime})$ with a completeness error of $1/2^{q(n)}$ and a copy-resistance error of $\delta^\prime$.
\end{lemma}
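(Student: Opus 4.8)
The plan is to amplify by parallel repetition with threshold verification, following the general strategy of \cite{AaC}. I would build the amplified scheme on $m = \Theta(q(n))$ independent key pairs: $\mathbf{KeyGen}'$ runs $\mathbf{KeyGen}$ independently $m$ times to produce $(k^{priv}_i, k^{pub}_i)_{i=1}^m$; $\mathbf{Bank}'$ runs $\mathbf{Bank}$ under each private key to output the product token $\sigma_1 \otimes \cdots \otimes \sigma_m$; and $\mathbf{Ver}'$, given an alleged $m$-register state, runs the original $\mathbf{Ver}$ on register $i$ against $k^{pub}_i$ and accepts if and only if at least $t$ of the $m$ verifications succeed. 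The entire content of the construction lives in the choice of threshold $t = \lceil m(1-\epsilon) - a \rceil$ with a margin $a = \Theta(\sqrt{m\,q(n)})$, which I would calibrate at the end. Since $m$ is polynomial in $n$, all three algorithms stay polynomial time.

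For completeness, I would note that on the honest product token the $m$ verifications are mutually independent, each succeeding with probability $1-\epsilon$, so the number of successes is a sum of $m$ independent Bernoulli variables with mean $m(1-\epsilon)$. Because $t$ sits a margin $a$ below this mean, a Hoeffding (Chernoff) bound gives $\Pr[\text{fewer than } t \text{ succeed}] \le e^{-2a^2/m}$, and the choice $a = \Theta(\sqrt{m\,q(n)})$ drives this to at most $2^{-q(n)}$, as required.

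For copy-resistance at $p=1$, I would first reduce to the single-key hypothesis coordinate by coordinate. Fix $i$ and consider an adversary $\mathcal{C}$ that turns one valid amplified token into two alleged amplified tokens $A,B$; let $X_i$ and $Y_i$ indicate that register $i$ of $A$ and of $B$ respectively passes $\mathbf{Ver}$ under $k^{pub}_i$. A reduction receiving a single challenge $(k^{pub}_i,\sigma_i)$ can sample every other coordinate itself by honestly running $\mathbf{KeyGen}$ and $\mathbf{Bank}$ (so it knows those private keys), feed the assembled amplified token to $\mathcal{C}$, and output the $i$-th registers of $A$ and $B$; hence the hypothesis gives $\Pr[X_i \wedge Y_i] \le \delta$ for every $i$. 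The combinatorial heart is then: if both $A$ and $B$ are accepted, then $\sum_i X_i \ge t$ and $\sum_i Y_i \ge t$, so $\sum_i (X_i + Y_i) \ge 2t$; since each summand is at most $2$ and equals $2$ only on the coordinates where both copies pass, writing $K = \sum_i \mathbf{1}[X_i \wedge Y_i]$ yields $m + K \ge 2t$, i.e. $K \ge 2t - m$. As $\mathbb{E}[K] \le m\delta$ and $2t - m = m(1-2\epsilon) - 2a > 0$, Markov's inequality bounds the probability that both forgeries are accepted by
\begin{equation*}
\frac{m\delta}{2t - m} = \frac{\delta}{(1-2\epsilon) - 2a/m}.
\end{equation*}

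The main obstacle is calibrating $t$ against two opposing demands at once: completeness forces $t$ strictly below the mean $m(1-\epsilon)$, whereas the copy-resistance factor only approaches its target as $t$ climbs toward that mean, equivalently as $a/m \to 0$. This is exactly where the hypotheses $\epsilon < 1/2$ and $\delta < 1-2\epsilon$ earn their keep, since they guarantee both that the admissible window $m/2 < t < m(1-\epsilon)$ is nonempty and that $\delta/(1-2\epsilon) < 1$. Because $a = \Theta(\sqrt{m\,q(n)})$ gives $2a/m = \Theta(\sqrt{q(n)/m}) \to 0$, for any prescribed $\delta' > \delta/(1-2\epsilon)$ I can take $m$ to be a sufficiently large polynomial multiple of $q(n)$ so that the displayed bound falls below $\delta'$ while the completeness bound remains $2^{-q(n)}$; checking that a single polynomial $m$ satisfies both simultaneously is the crux. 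A secondary point to verify is that the coordinate-wise reduction remains valid even when $\mathcal{C}$ entangles the registers of $A$ and $B$, which holds because $\mathbf{Ver}$ acts on the disjoint registers $A_i$ and $B_i$ and the event $X_i \wedge Y_i$ is well defined irrespective of that entanglement.
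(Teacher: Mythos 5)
The paper does not prove this lemma at all---it is imported verbatim from the cited reference \cite{AaC}---and your reconstruction is exactly the threshold parallel-repetition argument used there: product tokens over $m=\Theta(q(n))$ independent keys, a Chernoff bound for completeness, and the double-counting step $\sum_i(X_i+Y_i)\geq 2t \Rightarrow K\geq 2t-m$ combined with a coordinate-wise reduction and Markov's inequality for copy-resistance. Your calibration of $t$ and the remark that the per-coordinate events remain well defined under entanglement are both sound, so the proposal is correct and matches the source's approach.
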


It is also known as {\it Almost As Good As New Lemma} that the amplification of completeness error can be slowed down enough even for repeatedly verifying quantum money tokens. 

\begin{lemma}[``Almost As Good As New Lemma''\cite{AaC,Almost}]
\label{lem:Almost As Good}
Let us suppose a measurement on a mixed state $\sigma$ yields a particular outcome with a probability of $1 - \epsilon$.
Then after the measurement, $\tilde{\sigma}$ can be recovered such that $\|\tilde{\sigma}-\sigma\|_{tr} \leq \sqrt{\epsilon}$.
\end{lemma}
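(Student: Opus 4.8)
The plan is to reduce the disturbance estimate to a single overlap computation by passing to a purification. First I would observe that ``a particular outcome'' of an arbitrary measurement can be coarse-grained into the two-outcome POVM $\{E,\,I-E\}$, where $E$ is the POVM element attached to the distinguished outcome, so that $\text{Tr}(E\sigma)=1-\epsilon$; by Naimark's dilation any such measurement is realized physically, and the post-measurement state conditioned on the favorable outcome is $\tilde\sigma = \sqrt{E}\,\sigma\,\sqrt{E}\big/\text{Tr}(E\sigma)$. This $\tilde\sigma$ is the state we claim can be ``recovered,'' and the goal is to bound $\|\tilde\sigma-\sigma\|_{tr}$.

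Next I would introduce a reference system $R$ and a purification $\ket{\psi}$ of $\sigma$, i.e.\ a pure state with $\text{Tr}_R\ket{\psi}\bra{\psi}=\sigma$. Acting with the Kraus operator $\sqrt{E}\otimes I_R$ on $\ket{\psi}$ and normalizing produces a purification $\ket{\tilde\psi}$ of $\tilde\sigma$, whose normalization is $\|(\sqrt{E}\otimes I_R)\ket{\psi}\|=\sqrt{\text{Tr}(E\sigma)}=\sqrt{1-\epsilon}$. The central quantity is then the overlap $\langle\psi|\tilde\psi\rangle = \text{Tr}(\sqrt{E}\,\sigma)\big/\sqrt{1-\epsilon}$. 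The key step is the operator inequality $\sqrt{E}\ge E$, which holds because $0\le E\le I$ confines every eigenvalue of $E$ to $[0,1]$, where $\sqrt{x}\ge x$; consequently $\text{Tr}(\sqrt{E}\,\sigma)\ge\text{Tr}(E\sigma)=1-\epsilon$, and hence $|\langle\psi|\tilde\psi\rangle|^2\ge 1-\epsilon$.

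Finally I would convert the overlap bound into the trace-distance bound. For two pure states the trace distance has the closed form $\|\ket{\psi}\bra{\psi}-\ket{\tilde\psi}\bra{\tilde\psi}\|_{tr}=\sqrt{1-|\langle\psi|\tilde\psi\rangle|^2}$ in the convention $\|X\|_{tr}=\tfrac12\text{Tr}|X|$, so the fidelity bound immediately gives $\sqrt{1-(1-\epsilon)}=\sqrt{\epsilon}$; since the partial trace over $R$ sends $\ket{\psi}\bra{\psi}\mapsto\sigma$ and $\ket{\tilde\psi}\bra{\tilde\psi}\mapsto\tilde\sigma$, monotonicity of the trace norm under completely positive trace-preserving maps yields $\|\tilde\sigma-\sigma\|_{tr}\le\sqrt{\epsilon}$. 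The steps I expect to be most delicate are not conceptual but bookkeeping: keeping the normalization factors and the chosen trace-distance convention consistent so that the bound comes out as $\sqrt{\epsilon}$ rather than $2\sqrt{\epsilon}$, and confirming that passing to the two-outcome POVM loses no generality, so that the estimate applies to an arbitrary ancilla-assisted measurement and the ``recovery'' requires only the physical measurement process (and, in the dilated picture, undoing the known dilation unitary) with no dependence on secret data.
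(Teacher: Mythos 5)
The paper states this lemma without proof, deferring entirely to the citations \cite{AaC,Almost}; your argument is correct and is essentially the standard gentle-measurement proof found in those sources (purify $\sigma$, apply $\sqrt{E}\otimes I_R$, lower-bound the overlap via $\sqrt{E}\geq E$ on $0\leq E\leq I$, convert fidelity to trace distance for pure states, and descend by monotonicity of the trace norm under partial trace). The bookkeeping you flag does come out to $\sqrt{\epsilon}$ under the $\frac{1}{2}\mathrm{Tr}\abs{X}$ convention used in the cited lemma, and your remark that the general post-measurement Kraus operator differs from $\sqrt{E}$ only by a known partial isometry correctly accounts for the word ``recovered'' in the statement.
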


With the Almost As Good As New Lemma and Lemma \ref{lem:RCE}, a quantum money token can be made to have an exponentially lower completeness error permanently, which means that the token can be verified after a polynomial number of times. 
If a user can verify the quantum money a polynomial times without failure, we can show the following lemma.

\begin{lemma}
\label{lem:poly}
It is assumed that the quantum money verification for any polynomial number of times does not fail.
The completeness error is less than any polynomial parameter $1/q(n)$ for any confidence level.
\end{lemma}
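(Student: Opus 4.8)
The plan is to read the lemma as a statistical hypothesis test: the hypothesis to be rejected is ``$\epsilon \ge 1/q(n)$,'' and the observation is that the token passes verification on every one of a polynomial number $N$ of rounds. First I would fix a polynomial $q(n)$ and a target confidence level $1-\alpha$, and model the $N$ rounds as (nearly) independent Bernoulli trials in which each verification accepts a valid token with probability $1-\epsilon$. Under the hypothesis $\epsilon \ge 1/q(n)$, the probability of observing $N$ consecutive acceptances is then at most $(1-\epsilon)^N \le (1-1/q(n))^N \le e^{-N/q(n)}$.

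The confidence calculation is routine: choosing $N \ge q(n)\ln(1/\alpha)$ makes $e^{-N/q(n)} \le \alpha$, so under $\epsilon \ge 1/q(n)$ the event ``all $N$ verifications succeed'' has probability at most $\alpha$. Since $q(n)$ is polynomial in $n$ and $\ln(1/\alpha)$ is constant for a fixed confidence level (and only logarithmic even for an inverse-polynomial error), the required $N$ stays polynomial. Hence, conditioned on the assumed observation that no verification fails over these polynomially many rounds, I would reject $\epsilon \ge 1/q(n)$ at confidence $1-\alpha$ and conclude $\epsilon < 1/q(n)$, for any prescribed $q(n)$ and any fixed confidence level.

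The hard part is justifying the ``nearly independent Bernoulli'' model, because a quantum verification is a measurement that disturbs the token, so the rounds are not genuinely i.i.d.\ and the clean product bound $(1-\epsilon)^N$ is not automatic. This is exactly where Lemma \ref{lem:Almost As Good} enters: an acceptance occurring with probability $1-\epsilon_k$ on the current state leaves a post-measurement state within trace distance $\sqrt{\epsilon_k}$ of the state before that round, and since trace distance bounds the change in any measurement's acceptance probability, I would use the triangle inequality to control the accumulated drift $\sum_{j\le k}\sqrt{\epsilon_j}$ after $k$ rounds. The delicate point is that this drift could in principle raise later acceptance probabilities and weaken the product bound, and the per-round perturbation $\sqrt{\epsilon}$ dominates the per-round failure probability $\epsilon$ when $\epsilon$ is small; I would resolve this by working in the amplified small-error regime supplied by Lemma \ref{lem:RCE}, where $N\sqrt{\epsilon}$ remains negligible across any polynomial number of rounds, so that the accumulated disturbance perturbs the product bound only by a negligible additive amount and the inference $\epsilon < 1/q(n)$ survives.
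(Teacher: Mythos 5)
Your proposal is correct in substance but follows a different route from the paper. The paper's proof treats the $N$ verifications as i.i.d.\ Bernoulli trials and invokes the central limit theorem: approximating the count of failures by a normal distribution with mean $N\epsilon$ and variance $N\epsilon(1-\epsilon)$, the observation of zero failures sits $\sqrt{N\epsilon/(1-\epsilon)}$ standard deviations from the mean, and setting this equal to a standard score $\alpha$ at $\epsilon = 1/q(n)$ yields the paper's relation $N \leq \alpha^2(q(n)-1)$, whence $N$ polynomial suffices. Your version replaces the normal approximation with the exact tail bound $(1-1/q(n))^N \le e^{-N/q(n)}$ and the requirement $N \ge q(n)\ln(1/\alpha)$; this is arguably the better tool here, since the zero-failure event with small $\epsilon$ lies in the Poisson rather than the Gaussian regime, where the CLT approximation is weakest. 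The two approaches give the same qualitative conclusion ($N$ polynomial in $q(n)$ for fixed confidence). Your third paragraph addresses a genuine issue the paper's proof silently skips --- the rounds are not independent because verification disturbs the state --- and correctly identifies Lemma \ref{lem:Almost As Good} as the relevant repair. Be aware, though, of a residual tension in that step: the bound $N\sqrt{\epsilon}$ on the accumulated drift is only negligible if $\epsilon$ is already very small, which is what the test is supposed to establish; under the null hypothesis $\epsilon \ge 1/q(n)$ the drift over $N \approx q(n)$ rounds need not be small, so the product bound under the null is not fully justified by this argument alone. Since the paper assumes independence outright, your treatment is still strictly more careful than the original.
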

\begin{proof}
Let $N$ be the number of quantum money verifications.
When $N$ is sufficiently larger than $q(n)$, the distribution can be approximated as a normal distribution\cite{CLT}.
Letting the standard score be $\alpha$, $q(n)$ satisfies 
\[
N \leq \alpha^2(q(n)-1).
\]
Therefore, the number of verifications, N, is polynomial.
\end{proof}

From Lemma \ref{lem:poly}, a user can ensure that the completeness error is less than $1/q(n)$ with a high enough probability by verifying the quantum money the appropriate polynomial number of times.

\section{Quantum Patchwork Money}
In this section, we propose quantum patchwork money schemes generated by multiple untrustworthy banks.
First, we explain the fraud committed by untrustworthy banks, and then we define what properties are required for users to feel safe using the quantum patchwork money generated by these banks.
Next, if users are limited to using banks, we propose a quantum patchwork money scheme that satisfies these properties. 
We also demonstrate that the scheme can be extended to include nonbank users.
Following this, a reissue protocol for repairing any broken quantum patchwork money tokens is defined, demonstrating that the above security is satisfied even when users use the reissue protocol.

\subsection{Quantum Patchwork Money With Only Banks}
We define a security criterion that allows users to securely use quantum money even when the banks cannot be trusted.
\begin{defn}[Tight Security]
When a quantum money scheme satisfies the following conditions, it can be regarded as a quantum money scheme that can be used safely even if the banks cannot be trusted.
\begin{itemize}
\item The money also satisfies the completeness and soundness requirements even when users receive it from the bank.
\item If not all users permit to make copies, banks can only make copies that users who do not permit it accept by at most negligible probability of $\eta$.
\end{itemize}
\end{defn}

The first condition is satisfied for general public-key quantum money, meaning the second condition is essential.
Here, we propose quantum patchwork money that is based on public-key quantum money and satisfies the second condition.
We first define the protocol for generating quantum patchwork money as Protocol 1, and define a verification protocol for quantum patchwork money as Protocol 2.

\begin{table}[tp]
  \centering
  \begin{tabular}{p{12cm}}
    \hline
    {\bf Protocol 1} Generation of Quantum Patchwork Money \\
    \hline
       \begin{minipage}{12cm}
       \vspace{1pt}
       \begin{description}
        \item[\textbf{Step 1.}] \textbf{Choice of bank}\\        
        The bank chooses one bank.
        \item[\textbf{Step 2.}] \textbf{Generation of shards}\\
        Each bank generates a public-key quantum money called a ``shard.''
        Each bank publishes the public-key.
        \item[\textbf{Step 3.}] \textbf{Acceptance of shards}\\
        The bank chosen in step 1 receives a shard from the other banks.
        The bank verifies all shards received.
        If the bank accept all shards, then the next step can be taken; otherwise, one must start over from Step 2.
        \item[\textbf{Step 4.}] \textbf{Generation of quantum patchwork money}\\
        The bank combines the shards it receives with its own shards to form a single quantum patchwork money token.
        All banks sign all public keys by digital signature.
        \item[\textbf{Step 5.}] \textbf{Repetition of generating}\\
        Banks repeat steps 1--4 until all money tokens are generated.
        \\
       \end{description}
    \end{minipage}
    \\
    \hline
  \end{tabular}
\end{table}
\begin{table}[tp]
  \centering
  \begin{tabular}{p{12cm}}
    \hline
    {\bf Protocol 2} Verification of Quantum Patchwork Money \\
    \hline
       \begin{minipage}{12cm}
       \vspace{1pt}
       \begin{description}
        \item[\textbf{Step 1.}] \textbf{Verification of shards}\\
        A sender sends one shard of quantum patchwork money to a verifier, who then verifies the shard.
        If the verifier accepts the shard, they send it back to the sender; otherwise, the verifier terminates this protocol.
        \item[\textbf{Step 2.}] \textbf{Accept of quantum patchwork money}\\
        The verifier repeats Step 1 until all shards are accepted.
        Then, the sender sends the shards to the verifier one by one.
        The verifier then verifies the sent shard and keeps it in hand without sending it back this time.
        If the verifier accepts all shards, they complete the quantum patchwork money; otherwise, the verifier terminates this protocol.\\
       \end{description}
    \end{minipage}
    \\
    \hline
  \end{tabular}
\end{table}

\begin{thm}[Quantum Patchwork Money]
\label{thm:QPM}
Let $m$ be the number of banks that participate in generating a quantum patchwork money scheme.
A quantum patchwork money scheme is the scheme of quantum money generated by means of Protocol 1, where the shard is public-key quantum money with a completeness error of $\epsilon = 1/p(n)$ for $p(n)>2m$, a soundness error that is negligible probability $\lambda$, and a copy-resistance error that is negligible probability $\delta$.
The verification of quantum patchwork money is carried out using Protocol 2. 
This quantum patchwork money scheme is a quantum money scheme.
\end{thm}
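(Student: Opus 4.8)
The plan is to show that the patchwork token meets the three clauses of Definition~\ref{def:money} --- completeness, soundness, and copy-resistance --- by reducing each property of the combined token to the corresponding property of its $m$ constituent shards. The organizing observation is that a valid patchwork token is a collection of $m$ valid shards, one per participating bank, and that Protocol~2 accepts the token only when every shard is accepted by its own \textbf{Ver} under the published and signed public key. Hence the combined acceptance event is the intersection of the $m$ single-shard acceptance events, and the combined rejection/forgery events are controlled by unions over the shards.

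For completeness I would first bound the single-pass error by a union bound: a valid token is rejected only if some shard is rejected, and since each shard is accepted with probability $1-\epsilon = 1 - 1/p(n)$, the token is rejected with probability at most $m\epsilon = m/p(n)$. The hypothesis $p(n) > 2m$ is precisely what forces this below $1/2$, since $m/p(n) < 1/2$, yielding a combined completeness error under $1/2$ as Definition~\ref{def:money} demands. The one wrinkle is that Protocol~2 touches each shard more than once (the send-and-return round of Step~1 followed by the retaining round of Step~2), so I would invoke the Almost As Good As New Lemma (Lemma~\ref{lem:Almost As Good}) to bound the disturbance caused by each accepting measurement, and use the freedom granted by Lemma~\ref{lem:RCE} and Lemma~\ref{lem:poly} to drive the effective per-shard completeness error below any inverse polynomial, so that the accumulated trace-distance perturbation over the polynomially many verifications does not push the union bound above $1/2$; $p(n) > 2m$ remains the binding constraint for the single-pass acceptance.

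Soundness is the easiest clause: a random incorrect token built from public information alone must contain at least one incorrect shard, and that shard is accepted by its \textbf{Ver} with probability at most the shard soundness error $\lambda$; since acceptance of the whole token requires that shard to pass, the combined soundness error is at most $\lambda$ (a union over the shards gives at most $m\lambda$ in any case), which is negligible.

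The step I expect to be the crux is copy-resistance. I would argue by a counting reduction: suppose a device depending only on public data converts $p$ valid patchwork tokens into alleged tokens of which $q > p$ are accepted. Each accepted patchwork token contributes one accepting shard of every type $j$, so for a fixed type---say $j=1$---the device has produced $q > p$ states passing $\mathbf{Ver}_1$ under key $k_1$, starting from only the $p$ genuine type-$1$ shards present in the input. This is exactly a win in the copy-resistance game of the shard scheme, which occurs with probability at most $\delta$, so that $\Pr[\mathbf{Count} > p] \le \delta$ and the combined copy-resistance error is negligible; note that reducing to a single type suffices and no factor of $m$ is needed, since a patchwork count above $p$ forces a type-$1$ count above $p$. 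The delicate point to handle carefully is that the $p$ input tokens are held jointly and the adversary may entangle the shard registers across types and across tokens; I would therefore make precise that the marginal process producing the type-$1$ shards is a legitimate poly-size circuit in the shard copy-resistance experiment, with the remaining registers absorbed as ancilla or advice, so that the single-shard bound $\delta$ genuinely applies.
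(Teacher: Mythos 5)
Your proposal is correct and follows essentially the same route as the paper: the paper's proof computes the combined completeness error as $1-(1-1/p(n))^m$, bounded by $m/p(n)<1/2$ when $p(n)>2m$, and simply asserts that soundness and copy-resistance are ``clearly negligible.'' Your union bound for completeness, the one-bad-shard argument for soundness, and the pigeonhole reduction from a patchwork forger to a single-shard forger are exactly the details the paper leaves implicit (and you correct, in passing, the paper's reversed inequality $\epsilon^\prime > m/p(n)$, which should read $\leq$).
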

\begin{proof}
The soundness and copy-resistance errors in the quantum patchwork money scheme are clearly negligible.
The completeness error of the quantum patchwork money is $\epsilon^\prime = 1 - (1-1/p(n))^m > m/p(n)$, meaning when $p(n)>2m$, then $\epsilon^\prime < 1/2$.
Thus, the quantum patchwork money scheme is a quantum money scheme.
\end{proof}

The completeness error can be exponentially reduced from Lemma \ref{lem:RCE}.
However, the completeness error is not necessarily exponentially low because the bank cannot be trusted.
From Lemma \ref{lem:poly}, the verifier can ensure that the completeness error is $\epsilon < 1/2m$ with a sufficiently high probability, assuming that the verifier never fails to verify any polynomial number of times.
Each bank can prevent other banks from sending shards with completeness errors higher than $1/2m$ by performing a polynomial number of verifications instead of a single verification in step 2 of Protocol 1.
Therefore, it can be assumed that completeness errors are sufficiently low.

The quantum patchwork money scheme is a quantum money scheme that can be used even if the user does not trust the banks.
\begin{thm}[Quantum Patchwork Money Satisfies the Tight Security Conditions]
\label{thm:SQPM}
The quantum patchwork money scheme satisfies the tight security conditions when the users are limited to using banks.
\end{thm}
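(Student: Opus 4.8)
The plan is to treat the two bullets of the tight-security definition separately, with almost all of the effort going into the second one. For the first bullet (completeness and soundness are preserved when the token comes from the bank), I would argue that nothing new is needed beyond Theorem \ref{thm:QPM} together with public-key verifiability. A patchwork token is an $m$-tuple of shards, each an independent public-key quantum money instance whose public key is published and digitally signed in Step~4 of Protocol~1; running Protocol~2 a user verifies every shard against its own $k_{public}$. Completeness of the whole token is then bounded by $\epsilon' < 1/2$ exactly as in Theorem \ref{thm:QPM} (and can be driven below any $1/q(n)$ via Lemma~\ref{lem:RCE} and Lemma~\ref{lem:poly}). For soundness I would observe that a token can be accepted only if \emph{every} shard is accepted, so if the token is not a valid one at least one constituent shard must be incorrect; by the per-shard soundness that shard passes $\mathbf{Ver}$ with probability at most $\lambda$, hence the whole fake token is accepted with probability at most $\lambda$, which is negligible. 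This establishes that a malicious bank cannot pass off fake money.

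The heart of the argument is the second bullet. The structural observation I would build on is that producing a copy of a complete patchwork token requires producing a copy of \emph{each} of its $m$ shards, and that the shards live under distinct, independently generated key pairs. The bank that generated shard $i$ holds $k_{private}^{(i)}$ and may freely copy that shard, but any party \emph{lacking} that private key can copy shard $i$ only from its public key, which by the copy-resistance property of the underlying public-key quantum money succeeds with probability at most $\delta$. I would then formalize ``not all users permit copies'' by fixing a bank $j$ that does not consent, and modelling the coalition of consenting banks as a single adversary that holds all private keys except $k_{private}^{(j)}$, together with $p$ valid patchwork tokens.

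The reduction I would carry out runs as follows. Suppose the coalition outputs $q > p$ patchwork tokens all accepted by Protocol~2. Then in particular it has output $q$ shard-$j$ pieces accepted by $\mathbf{Ver}$ under $k_{public}^{(j)}$, starting from the $p$ copies of shard $j$ contained in the $p$ tokens and using only public information about $j$. This is precisely a copy-resistance experiment for shard $j$, in the sense of the inequality $\text{Pr}[\mathbf{Count}(k_{public},\mathcal{C}(k_{public},\sigma_1,\ldots,\sigma_p))>p]\leq\delta$ of the money theorem, so the probability of success is at most $\delta$. Setting $\eta$ equal to this bound (negligible) gives the claim: with all but negligible probability the coalition cannot manufacture a patchwork token acceptable to the non-consenting bank $j$.

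The step I expect to be the main obstacle is making this last reduction genuinely rigorous, namely showing that the coalition's auxiliary resources --- the other $m-1$ shards and their private keys, and the remaining structure of the patchwork tokens --- give no advantage against shard $j$. The point I would stress is that $(k_{private}^{(j)}, k_{public}^{(j)})$ and the shard-$j$ states are produced by an independent invocation of $\mathbf{KeyGen}$ and $\mathbf{Bank}$, so all of the coalition's extra data is statistically independent of them and can be hardwired into, or generated internally by, the copying circuit $\mathcal{C}$. Hence any successful full-token copier yields a single-shard copier contradicting the $\delta$ bound. A secondary care point is the interactive, shard-by-shard structure of Protocol~2, in which shards are handed back during Step~1: because each returned shard is re-verified, the adversary gains no usable copy from these returns, so the bound is not loosened beyond the per-shard copy-resistance $\delta$.
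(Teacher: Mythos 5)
Your proposal is correct and follows essentially the same route as the paper: the first bullet is inherited from Theorem \ref{thm:QPM} and public-key verifiability, and the second bullet reduces to the copy-resistance of the one shard whose issuing bank does not consent, so that any coalition lacking that bank's private key succeeds with probability at most $\delta$ (with the all-banks-consent case excluded because then every user has given permission). The paper's own proof states this in one terse paragraph ($\eta < \lambda, \delta$ because each shard's soundness and copy-resistance errors are negligible); your version spells out the reduction, the independence of the key pairs, and the handling of Protocol~2's interactive structure, which the paper leaves implicit.
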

\begin{proof}
Even if the user receives quantum patchwork money from the bank, the scheme clearly satisfies the completeness and soundness requirements.
Since soundness error $\lambda$ and copy-resistance error $\delta$ of each shard are negligible, even if some banks make copies, the probability that the user will accept a copy is at most a negligible probability of $\eta<\lambda,\delta$.
Of course, if all banks participated in making copies, they could create complete copies, but this would not be a problem since all the users, i.e., the banks, would have given their permission.
Therefore, the quantum patchwork money scheme satisfies the tight security conditions.
\end{proof}

\begin{table}[tp]
  \centering
  \begin{tabular}{p{12cm}}
    \hline
    {\bf Protocol 3} Sign Quantum Patchwork Money\\
    \hline
       \begin{minipage}{12cm}
       \vspace{1pt}
       \begin{description}
        \item[\textbf{1.}] \textbf{The first nonbank user}
        \vspace{-5pt}
        \begin{description}
        \item[Step 1.] Verification of shards\\
        A sender sends one shard of quantum patchwork money to the first nonbank user.
        The user then verifies the shard.
        If the user accepts the shard, they send it back to the sender, otherwise the user terminates this protocol.
        \item[Step 2.] Adding a shard for the signature\\
        The user repeats Step 1 until all shards are accepted. 
        The user generates a shard with the same conditions as the bank's shard as a signature.
        The user sends the signature shard to the sender and distributes the public key.
        If accepted, the sender verifies the signature shard and adds it to the quantum patchwork money; otherwise, the sender repeats until it is accepted.
        \item[Step 3.] Signature to signature\\
        All banks sign the public key for the shard which is the user's signature.
        \item[Step 4.] Repeat Steps 1--3 for all quantum patchwork money  \\
        Steps 1--3 are repeated until the user has signed all quantum patchwork money.
        \vspace{-5pt}
        \end{description}
        \item[\textbf{2.}] \textbf{Other nonbank users}
        \vspace{-5pt}
        \begin{description}
        \item[Step 1.] Verification of shards\\
        A sender sends one shard of quantum patchwork money or its other nonbank user's signature to the nonbank user.
        The user verifies the shard.
        If the user accepts the shard, they send it back to the sender; otherwise, the user terminates this protocol.
        \item[Step 2.] Adding a shard for the signature\\
        The user repeats Steps 1 until all shards are accepted.
        The user generates a shard with the same conditions as the bank's shard as a signature.
        The user sends the signature shard to the sender and distributes the public key.
        If accepted, the sender verifies the signature shard and adds it to the quantum patchwork money; otherwise, the sender repeats until it is accepted.
        \item[Step 3.] Signature to signature\\
        All banks and other nonbank user sign the public key for the shard which is the user's signature.
        \item[Step 4.] Repeat Steps 1--3 for all quantum patchwork money \\
        Steps 1--3 are repeated until the user has signed all quantum patchwork money.\\
        \end{description}
       \end{description}
    \end{minipage}
    \\
    \hline
  \end{tabular}
\end{table}

While it has been demonstrated that each bank could use quantum patchwork money without trusting the other banks, the current scheme is not safe for users other than the banks that participated in generating the quantum patchwork money. 
Therefore, we propose an additional protocol for nonbank users to use without trusting the bank.
In Protocol 3, the nonbank users add their own signature shards to the quantum patchwork money; provided they have the same properties as the bank's shards; this means that after signing, the nonbank users will be in the same position as the bank.
In fact, even if these users do sign for the quantum patchwork money, there is no disadvantage, while there is an advantage in increasing the number of users of quantum patchwork money.

\begin{thm}[Signed Quantum Patchwork Money]
Let $m$ be the number of banks that participated in generating a signed quantum patchwork money scheme.
The signed quantum patchwork money with the signature scheme is the scheme of generating quantum money by means of Protocol 1 and adding a shard for the signature using Protocol 3, where the shard is public-key quantum money with a completeness error of $\epsilon = 1/p(n)$ for $p(n)>2m$, a soundness error that is negligible probability $\lambda$, and a copy-resistance error that is negligible probability $\delta$.
Also, the verification of the signed quantum patchwork money is carried out using Protocol 2.
This signed quantum patchwork money scheme is a quantum money scheme.
\end{thm}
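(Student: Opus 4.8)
The plan is to reduce the statement to Theorem~\ref{thm:QPM}. A signed quantum patchwork money token produced by Protocol~1 together with Protocol~3 differs from an ordinary quantum patchwork money token only in that its collection of shards has been enlarged: beyond the $m$ bank shards it now also carries the signature shards appended by the nonbank users. Crucially, Protocol~3 specifies that each signature shard is ``generated with the same conditions as the bank's shard,'' so it is itself a public-key quantum money shard with completeness error $\epsilon = 1/p(n)$, negligible soundness error $\lambda$, and negligible copy-resistance error $\delta$. Verification is still carried out by Protocol~2, which accepts the whole token only if every shard is individually accepted. Thus the signed token is again a conjunction of independent public-key quantum money shards, and I would treat the augmented shard set exactly as the shard set of Theorem~\ref{thm:QPM}, with the count $m$ replaced by the total number of shards $m + k$, where $k$ is the number of nonbank signers.

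Soundness and copy-resistance transfer immediately. Because Protocol~2 accepts only when each shard verifies, an adversary wishing to forge or to copy the composite token must forge or copy at least one genuine shard; since every shard has negligible soundness error $\lambda$ and negligible copy-resistance error $\delta$, the composite soundness and copy-resistance errors remain negligible, bounded as in the proof of Theorem~\ref{thm:QPM}. Adding signature shards only increases the number of shards an adversary must defeat, so it can only help, never hurt, these two properties.

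The one point needing care is completeness, and here the hypothesis $p(n) > 2m$ is not by itself enough. With $k$ signature shards present the token carries $m + k$ shards, so its completeness error is $\epsilon^\prime = 1 - (1 - 1/p(n))^{m+k}$, which grows toward $1$ as the shard count increases and exceeds $1/2$ once $m + k$ is of order $p(n)$. To close this gap I would invoke the error-reduction machinery already established: by Lemma~\ref{lem:RCE} each shard's completeness error can be driven exponentially low, and, exactly as each bank does in Step~2 of Protocol~1, each signer in Protocol~3 verifies an incoming shard a polynomial number of times before accepting it, so by Lemma~\ref{lem:poly} any shard whose completeness error exceeds the threshold $1/(2(m+k))$ can be rejected. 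This keeps $\epsilon^\prime < 1/2$ for any polynomial number of signers, restoring completeness, and together with the preserved soundness and copy-resistance it yields a scheme satisfying Definition~\ref{def:money}. The main obstacle is precisely this completeness bookkeeping: since the signer count $k$ is suppressed in the theorem statement, the argument must make explicit that it is the repeated-verification and amplification controls, rather than the bare bound $p(n) > 2m$, that keep the total completeness error below $1/2$.
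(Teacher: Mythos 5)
Your proposal is correct and follows the same route as the paper—reduction to Theorem~\ref{thm:QPM}—but the paper's own proof consists of the single sentence ``This is clear from Theorem~\ref{thm:QPM},'' whereas you identify a genuine subtlety that this one-liner glosses over: after $k$ nonbank users sign, the token carries $m+k$ shards, so the hypothesis $p(n)>2m$ no longer directly gives a composite completeness error below $1/2$, and the bound $\epsilon^\prime = 1-(1-1/p(n))^{m+k}$ can exceed $1/2$ once the signer count is comparable to $p(n)$. Your repair—invoking Lemma~\ref{lem:RCE} to drive each shard's completeness error exponentially low and Lemma~\ref{lem:poly} to let each signer reject shards whose completeness error exceeds $1/(2(m+k))$ via polynomially many verifications—is exactly the machinery the paper deploys in the discussion following Theorem~\ref{thm:QPM}, so it is consistent with the authors' intent even though their stated proof never makes the dependence on $k$ explicit. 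The soundness and copy-resistance half of your argument matches the paper's (each shard's errors are negligible, and adding shards only makes forgery harder). In short, your proof is a strictly more careful version of the paper's, and the completeness bookkeeping you flag is a real gap in the theorem statement as written, not in your argument.
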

\begin{proof}
This is clear from Theorem \ref{thm:QPM}.
\end{proof}

\begin{thm}[Signed Quantum Patchwork Money Satisfies the Tight Security Conditions]
\label{thm:QPMWS}
The signed quantum patchwork money scheme satisfies the tight security conditions when the users are limited using to banks and the user adds their signature.
\end{thm}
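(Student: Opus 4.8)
The plan is to reduce this statement to Theorem~\ref{thm:SQPM} by treating the signing nonbank user as an additional bank. The preceding discussion already records the crucial observation: once a nonbank user runs Protocol~3 and attaches a signature shard generated with the same parameters as a bank shard, the user occupies exactly the same position as a participating bank. So I would argue that the signed quantum patchwork money scheme with $m$ participating banks together with a signing user is, from the verifier's standpoint, identical to a (basic) quantum patchwork money scheme with $m+1$ participants, and then invoke Theorem~\ref{thm:SQPM} for the augmented collection.

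First I would dispatch the completeness and soundness condition. By the theorem on signed quantum patchwork money (obtained directly from Theorem~\ref{thm:QPM}), the signed scheme is a quantum money scheme, so it satisfies completeness and soundness by definition, and this holds irrespective of whether the token is received from a bank or from another user. Hence the first tight-security condition is immediate.

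The content is in the second condition. Here I would observe that a valid signed token must contain the user's signature shard, which is public-key quantum money generated with the user's own private key and therefore has negligible copy-resistance error $\delta$ and soundness error $\lambda$ against any device that depends only on public information. For a coalition that does not include the signing user to produce a second token that the user will accept, the coalition must reproduce this particular signature shard from public data alone; by copy-resistance this succeeds with probability at most $\delta$, and an incorrect substitute is accepted with probability at most $\lambda$. Thus any coalition of the $m$ banks---none of whom has been authorized by the user---can circulate a copy that the user accepts only with negligible probability $\eta < \lambda,\delta$. This is precisely the strengthening over Theorem~\ref{thm:SQPM}: in the bank-only scheme the complete coalition of all participants could copy with all participants' consent, but here the signing user's own shard blocks even the full bank coalition unless the user also consents, so a non-consenting user accepts a copy only with negligible probability.

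The hard part will be justifying that the coalition genuinely cannot sidestep the signature shard by substituting a different one, so that the reduction to Theorem~\ref{thm:SQPM} is valid. This is exactly what the \emph{signature to signature} step of Protocol~3 secures: having all banks digitally sign the public key of the user's signature shard pins down the public-key list, so a verifier following Protocol~2 rejects any token whose signature component verifies under a different public key. I would therefore make explicit (i) that the user's private key is unknown to the bank coalition, so the copy-resistance bound $\delta$ applies to the signature shard, and (ii) that the digital-signature binding forecloses key-substitution attacks, leaving reproduction of the exact signature shard as the only route to a copy the user accepts. With these two points established, the negligible bound follows exactly as in the proof of Theorem~\ref{thm:SQPM}, now with the signing user counted among the non-consenting parties.
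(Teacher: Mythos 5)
Your proposal is correct and follows essentially the same route as the paper, which proves this theorem simply by citing Theorem~\ref{thm:SQPM} on the grounds (stated in the surrounding text) that a user who attaches a signature shard with the same properties as a bank's shard occupies the same position as a bank. Your write-up supplies considerably more detail than the paper's one-line proof---in particular the explicit reduction to an $(m+1)$-participant instance and the discussion of why the signature shard cannot be substituted---but the underlying argument is the same.
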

\begin{proof}
This is clear from Theorem \ref{thm:SQPM}.
\end{proof}

Using Protocol 3, even nonbank users can use quantum patchwork money without trusting the bank. 
However, it is extremely time-consuming for every user to sign every quantum patchwork money. 
Therefore, we propose a scheme in the following section that retains the same security for banks but relaxes the security for nonbank users to ensure ease of use.

\subsection{Reissue Protocol}
Up to this point, we had not envisioned the case where the shards that consist of the quantum patchwork money would break.
However, if the completeness error of the shard is nonzero, the shard will be broken with a certain probability of verification.
Therefore, even if only shard is broken, a protocol is needed to repair the quantum patchwork money properly.
The protocol for repairing quantum patchwork money Protocol 4, is labeled a ``reissue protocol.''

\begin{table}[tp]
  \centering
  \begin{tabular}{p{12cm}}
    \hline
    {\bf Protocol 4} Reissue Protocol\\
    \hline
       \begin{minipage}{12cm}
       \vspace{1pt}
       \begin{description}
        \item[\textbf{Step 1.}] \textbf{Verification of other shards}\\
        The user sends one of the remaining shards to the bank that created the broken shard.
        The bank verifies the shard, and if the bank accepts it, the shard is returned to the user; otherwise, the protocol is terminated.
        This is repeated until all shards have been verified.
        \item[\textbf{Step 2.}] \textbf{Generation of a new shard}\\
        If all remaining shards are accepted, the bank generates a new shard and send it to the user.
        The bank then distributes the public-key.
        The user verifies the new shard received from the bank.
        \item[\textbf{Step 3.}] \textbf{Complete reissue}\\
        Step 2 is repeated until the user accepts the new shard.
        Then all banks then sign the public-key.\\
       \end{description}
    \end{minipage}
    \\
    \hline
  \end{tabular}
\end{table}

We show that a quantum patchwork money scheme using the reissue protocol also satisfies security.
\begin{thm}[Quantum Patchwork Money Using the Reissue Protocol Satisfies the Tight Security Conditions]
\label{thm:QPMU}
The quantum patchwork money scheme using the reissue protocol satisfies the tight security conditions.
\end{thm}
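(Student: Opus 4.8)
The plan is to reduce the claim to Theorem~\ref{thm:SQPM}, treating the reissue protocol as a sequence of single-shard replacements that each preserve the invariants established there. First I would verify that completeness and soundness are untouched: by construction in Protocol~4, the new shard produced in Step~2 is generated under exactly the same conditions as an original bank shard, namely a public-key quantum money token with completeness error $1/p(n)$, negligible soundness error $\lambda$, and negligible copy-resistance error $\delta$. Hence after any reissue the token is again an $m$-shard quantum patchwork money of the form covered by Theorem~\ref{thm:QPM}, so its completeness error remains below $1/2$ and its soundness error stays negligible. This disposes of the first tight-security condition directly.

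The substance lies in the second condition, namely that no coalition short of all banks can spawn a valid copy that a non-consenting user would accept. Here I would isolate the two ways the reissue protocol could in principle leak a copy. The first is Step~1 of Protocol~4, where the reissuing bank temporarily holds the $m-1$ remaining shards: since that bank possesses only the public keys of the other banks' shards, the copy-resistance guarantee (negligible $\delta$) forbids it from producing any extra accepted shard from them, so the brief custody yields no advantage. The second, and the reason the protocol reissues with a fresh public key rather than the original one, is the ``reissue-all-shards'' attack sketched in the introduction: a user who secretly retains a shard, declares it lost, and reissues could otherwise accumulate two accepted shards in a single slot and eventually assemble two full tokens.

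The main obstacle will be arguing rigorously that the public-key update closes this attack. I would formalize it as an invariant: at every moment each slot has a single current public key, the verifier in Protocol~2 checks each shard against the current key, and the withheld old shard therefore fails to verify once its slot's key is updated and all banks have re-signed. Under this invariant each reissue strictly \emph{replaces} one accepted shard by another in the same slot rather than \emph{adding} one, so the count of simultaneously valid shards per slot never exceeds one without the corresponding bank itself generating additional copies. Combining the per-slot invariant with copy-resistance on the individual shards, exactly as in the proof of Theorem~\ref{thm:SQPM}, the probability that a non-consenting user accepts an unauthorized copy remains bounded by the negligible $\eta<\lambda,\delta$. A full copy still requires the cooperation of all $m$ banks, which entails the consent of all users, so the tight-security conditions continue to hold.
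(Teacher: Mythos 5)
Your proposal is correct and follows essentially the same route as the paper: reduce to Theorem~\ref{thm:SQPM} and observe that the public-key update in Step~2 of Protocol~4 turns every reissue into a replacement rather than a duplication, so at most one version of each slot's shard is ever simultaneously accepted (the paper phrases your per-slot invariant as the copy and the original having their positions ``swapped,'' after explicitly working through the case where $m-1$ of the $m$ banks collude and use the reissue protocol to complete a forgery at the cost of invalidating the original). Your additional observations---that the fresh shard meets the same parameters so completeness and soundness are preserved, and that the bank's temporary custody of the remaining shards in Step~1 yields no advantage by copy-resistance---are points the paper leaves implicit.
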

\begin{proof}
When $m-2$ users out of $m$ users make copies, this is the same as the proof of Theorem \ref{thm:SQPM}.
Meanwhile, if $m-1$ out of $m$ users make copies, complete copies can be made by using the reissue protocol to change the shards of the banks that did not participate in the fraud to the correct shards.
However, the new shard in Step 2 of the reissue protocol will update its public key.
Thus, the users can make a correct copy, but the original quantum patchwork money will be rejected when verified.
In other words, the position of the copy and the original quantum patchwork money can be regarded as swapped.
Only either the quantum patchwork money or one of its copies will always be accepted.
Thus, quantum patchwork money using the reissue protocol also satisfies the tight security conditions.
\end{proof}

\section{Extended Quantum Patchwork Money}
In the previous section, we showed that a quantum patchwork money scheme could be used by banks and nonbank users without trusting the banks.
However, this type of scheme is inconvenient when many users participate.
Thus, we propose a more accessible scheme using the reissue protocol in exchange for relaxed security.
Specifically, we propose an extended quantum patchwork money scheme that is also safe for nonbank users under the condition that banks prioritize their own interests.

\subsection{Relaxed Security}
In the previous section, we defined the security that is safe for any bank; however, in this subsection, a self-interested supremacy bank, which is considered to be more common, is adopted. 
To start, we must define the concept of a {\it self-interest supremacy bank}.
\begin{defn}[Self-Interested Supremacy Bank]
A self-interested supremacy bank only creates copies of money if the bank itself can make profit $b$ ($b>0$) through such activity.
\end{defn}

In other words, a self-interested supremacy bank will not commit fraud if its action is not beneficial to the bank itself. 
By taking the appropriate actions, nonbank users can reduce the profits and increase the losses that banks incur through fraudulent activity, thus preventing bank fraud. 
Therefore, we define a relaxed security criterion, one that allows users to securely use quantum money even when the self-interested supremacy banks cannot be trusted.
\begin{defn}[Relaxed Security]
When a quantum money scheme satisfies the following conditions, it can be regarded as a quantum money scheme that can be used safely even if the self-interested supremacy banks cannot be trusted.
\begin{itemize}
\item The money satisfies the completeness and soundness requirements even when the users receive it from the bank.
\item The bank's profit $b$ depends on parameters chosen by the users and can be set to $b<0$ if parameters are appropriately chosen.
\end{itemize}
\end{defn}

\subsection{Extended Quantum Patchwork Money}
In this subsection, we propose an extended quantum patchwork money that satisfies the relaxed security conditions.
First, we propose Protocols 5 and 6, which allow nonbank to become banks or banks to become nonbank users.
Here, it should be noted that Protocol 5 is almost the same as Protocol 3, with the difference between the two being the threshold for the amount of quantum patchwork money a user needs to have to gain a position equivalent to that of the banks.
This means that if the threshold is set to zero, Protocol 5 will be the same as Protocol 3. 
The same argument can be made as in Theorem \ref{thm:QPMWS}, since a threshold of zero means that any user can be in the same position as the bank.
In addition, the extended scheme requires that a bank become a nonbank user if the amount of quantum patchwork money it possesses falls below a specific threshold which is where Protocol 6 comes in.
While in practice, a bank does not necessarily have to become a nonbank user, this assumption is adopted to simplify our model. 
Finally, the reissue protocol is changed to Protocol 7, which includes a set cost.

\begin{table}[tp]
  \centering
  \begin{tabular}{p{12cm}}
    \hline
    {\bf Protocol 5} Protocol for Users to Become Banks\\
    \hline
       \begin{minipage}{12cm}
       \vspace{1pt}
        \begin{description}
        \item[Step 1.] Verification of shards\\
        Each bank ensures that the total amount of quantum patchwork money held by a user who wants to become a bank exceeds a specific threshold.
        If the user's quantum money exceeds the threshold, the next step can be taken; otherwise, the protocol is terminated.
        \item[Step 2.] Verification of shards\\
        A bank sends one shard of quantum patchwork money to the nonbank user, who then verifies the shard.
        If the user accepts the shard, the user sends the shard back to the bank; otherwise the user terminates this protocol.
        \item[Step 3.] Adding a shard\\
        The user repeats Step 1 until all shards are accepted.
        The user generates a shard with the same conditions as the bank's shard; sends it to the bank and distributes the public key.
        If accepted, the bank verifies the shard and adds it to the quantum patchwork money; otherwise, the bank repeats the step until it is accepted.
        \item[Step 4.] Signature to signature\\
        All banks sign the public key for the shard which is the user's signature.
        \item[Step 5.] Repeat Steps 2-- for all quantum patchwork money \\
        Steps 2--4 are repeated until the user has added shards to all quantum patchwork money.\\
        \end{description}
    \end{minipage}
    \\
    \hline
  \end{tabular}
\end{table}

\begin{table}[tp]
  \centering
  \begin{tabular}{p{12cm}}
    \hline
    {\bf Protocol 6} Check an Amount of the Quantum Patchwork Money\\
    \hline
       \begin{minipage}{12cm}
       \vspace{1pt}
        \begin{description}
        \item[Step 1.]  Check an amount of the quantum patchwork money\\
        The users check whether or not the total amount of quantum patchwork money held by the bank exceeds a specific threshold.
        If the amount does not exceed the threshold, the next step can be taken; otherwise, the protocol is terminated.
        \item[Step 2.] Erase the shards of the bank\\
        Banks that have an amount below the threshold will be announced to the users, which will then be verified by all users.
        Following the verification, the users will exclude the shard generated by the bank from thier quantum patchwork money.
        \item[Step 2.] Exclude the bank\\
        If the shards are excluded from all quantum patchwork money, the bank becomes a nonbank user.\\
        \end{description}
    \end{minipage}
    \\
    \hline
  \end{tabular}
\end{table}

\begin{table}[tp]
  \centering
  \begin{tabular}{p{12cm}}
    \hline
    {\bf Protocol 7} Reissue Protocol With a Cost\\
    \hline
       \begin{minipage}{12cm}
       \vspace{1pt}
       \begin{description}
        \item[\textbf{Step 1.}] \textbf{Verification of other shards}\\
        The user sends one of the remaining shards to the bank that created the broken shard.
        The bank then verifies the shard, and, if accepted, returns it to the user; otherwise, terminates this protocol is terminated. 
        This step is repeated until all shards have been verified.
        \item[\textbf{Step 2.}] \textbf{Generation of a new shard}\\
        If all the remaining shards are accepted, the bank generates a new shard, sends it to the user, and distributes the public key.
        The user then verifies the new shard received from the bank.
        \item[\textbf{Step 3.}] \textbf{Complete reissue and pay cost}\\
        Step 2 is repeated until the user accepts the new shard, at which point the user pays a fee to the bank.
        Then all banks then sign the public-key.\\
       \end{description}
    \end{minipage}
    \\
    \hline
  \end{tabular}
\end{table}

\begin{thm}[Extended Quantum Patchwork Money]
\label{thm:EQPM}
Let $m$ be the number of banks generating an extended quantum patchwork money scheme that involves generating quantum money by means of Protocol 1, where the shard is public-key quantum money with a completeness error of $\epsilon = 1/p(n)$ for $p(n)>2m$, a soundness error that is negligible probability $\lambda$, and a copy-resistance error that is negligible probability $\delta$.
In addition, the verification of the extended quantum patchwork money is carried out using Protocol 2.
The extended quantum patchwork money scheme includes Protocols 5–-7.
The extended quantum patchwork money scheme is a quantum money scheme.
\end{thm}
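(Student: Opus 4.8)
The goal is to verify that the tokens produced by the extended scheme satisfy the three properties of Definition \ref{def:money}: completeness, soundness, and copy-resistance. The plan is to observe that the extended scheme produces exactly the same kind of object as the plain scheme of Theorem \ref{thm:QPM}, namely a token composed of $m$ shards, each of which is a public-key quantum money state with completeness error $1/p(n)$, negligible soundness error $\lambda$, and negligible copy-resistance error $\delta$, and verified shard-by-shard via Protocol 2. I would therefore first reduce the bulk of the claim to Theorem \ref{thm:QPM}, and then separately argue that the newly introduced Protocols 5--7 do not disturb any of the three money properties.

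For soundness and copy-resistance I would reuse the argument of Theorem \ref{thm:QPM}: since Protocol 2 accepts a composite token only when every constituent shard is accepted, any forgery or extra copy must defeat the soundness or copy-resistance of at least one shard, so the composite errors are controlled by the per-shard errors $\lambda$ and $\delta$ and stay negligible. For completeness I would invoke the same estimate as in Theorem \ref{thm:QPM}: a valid token of $m$ shards is rejected with probability $\epsilon' = 1 - (1-1/p(n))^m \le m/p(n)$, so the hypothesis $p(n) > 2m$ gives $\epsilon' < 1/2$, as required by Definition \ref{def:money}. The key point is that each of Protocols 5--7 only edits the list of shards making up a token while preserving the invariant that every shard meets the stated error parameters: Protocol 5 appends a shard generated under the same conditions as a bank shard, Protocol 6 removes a shard, and Protocol 7 replaces a broken shard by a freshly generated one with an updated public key but identical error parameters, the fee being irrelevant to the money properties. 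After each edit the token is again a composite of shards satisfying the hypotheses of Theorem \ref{thm:QPM}, so soundness and copy-resistance remain negligible.

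The step I expect to be the main obstacle is the completeness bound under Protocol 5, since repeatedly adding shards increases the current shard count $m$ and can in principle push $\epsilon'$ toward and past $1/2$. I would handle this by reading the hypothesis $p(n) > 2m$ as a standing constraint tied to the current shard count, so that the scheme is well-defined precisely while this inequality holds; Protocol 6 only decreases $m$ and Protocol 7 leaves it unchanged, so both are harmless in this respect. Where a sharper margin is needed, the completeness amplification of Lemma \ref{lem:RCE} together with the repeated-verification guarantee of Lemma \ref{lem:poly} can be used to keep each shard's error comfortably below the per-shard threshold $1/2m$, so that the composite completeness error stays below $1/2$ throughout the use of the extended protocols. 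This establishes that the extended quantum patchwork money scheme is a quantum money scheme.
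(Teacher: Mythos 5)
Your proposal takes essentially the same route as the paper, whose entire proof of Theorem \ref{thm:EQPM} is the single sentence ``This is clear from Theorem \ref{thm:QPM}.'' Your version fills in the reduction the paper leaves implicit, and in particular your observation that Protocol 5 increases the shard count $m$ and therefore threatens the standing constraint $p(n) > 2m$ identifies a real gap that the paper's one-line proof silently passes over; your reading of $p(n) > 2m$ as a constraint on the current shard count, backed by Lemmas \ref{lem:RCE} and \ref{lem:poly}, is a reasonable way to close it.
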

\begin{proof}
This is clear from Theorem \ref{thm:QPM}.
\end{proof}

We next show that the extended quantum patchwork money scheme satisfies the
relaxed security conditions, assuming the following two properties for the quantum patchwork money.

\begin{ass}[Assumptions Regarding the Value of the Quantum Patchwork Money]
\label{ass:ass}
The following assumptions regarding the value of the quantum patchwork money are adopted:
\begin{itemize}
\item When $m$ (the number of banks) reaches zero through following Protocol 6, all quantum patchwork money loses its value.
\item When the banks circulates copies without the users' consent and the users become aware of this, the quantum patchwork money loses its value.
\end{itemize}
\end{ass}
These two assumptions are regarded as natural assumptions since quantum patchwork money is essentially money in the first place.
Thus, the security of the extended quantum patchwork money scheme can be confirmed.

\begin{thm}[Extended Quantum Patchwork Money Satisfies the Relaxed Security Conditions]
\label{thm:EQPMRS}
The extended quantum patchwork money scheme satisfies the relaxed security conditions for nonbank users and satisfies the tight security conditions for banks, under Assumption \ref{ass:ass}.
\end{thm}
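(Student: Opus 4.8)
The plan is to split the statement along its two audiences, since the two halves are proved by entirely different means: the banks retain the full \emph{tight} guarantee and inherit it structurally from the earlier theorems, whereas nonbank users obtain only the \emph{relaxed}, game-theoretic guarantee, which is the real content here.

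First I would dispatch the banks. Every bank that participates in generation sits in exactly the position analyzed in Theorems \ref{thm:SQPM} and \ref{thm:QPMU}: the extended scheme adds only Protocols 5–7, none of which changes the structural fact that a copy of a given shard is producible only by its issuing bank. Hence no proper subset of banks can assemble a complete copy without unanimous consent, and the reissue step still \emph{swaps} rather than duplicates tokens, precisely as in the proof of Theorem \ref{thm:QPMU}. Tight security for banks therefore follows immediately, and the completeness/soundness clause of both security notions holds because, by Theorem \ref{thm:EQPM}, the extended scheme is already a quantum money scheme. This also discharges the first condition of relaxed security for nonbank users.

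The substance is the second condition of relaxed security: that the profit $b$ of a self-interested supremacy bank from circulating unauthorized copies is a function of the user-chosen parameters — the reissue fee in Protocol 7 and the reserve threshold in Protocol 6 — and can be driven below zero. Here I would model the bank's expected payoff as its fraud gain minus two user-controlled penalties, mirroring the two inequalities sketched in the introduction. The first penalty is the forgone future reissue revenue: each reissue updates a public key and exposes any stale copy, so by Assumption \ref{ass:ass} detection destroys all value and cuts off the bank's stream of reissue fees; setting the fee high makes this loss dominate the fraud gain. The second penalty is the reserve itself: since the bank's reserve is quantum patchwork money, detection wipes out its value, so a sufficiently large reserve threshold makes the expected reserve loss exceed the fraud gain. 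I would then invoke the probabilistic detection guarantee noted earlier — that a user running the reissue protocol discovers circulating copies with non-negligible probability — to argue that each penalty scales with the corresponding parameter while the fraud gain is bounded by the limited number of copies the bank can place, and conclude that an appropriate choice of fee and reserve forces $b<0$.

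The main obstacle I anticipate is making the detection-probability argument quantitative and uniform over strategies. Asserting that one reissue reveals a copy ``with reasonable probability'' is easy, but to conclude $b<0$ one must bound the expected number of reissues before detection, relate it to the reserve value genuinely at stake (which is itself eroded once the money's value collapses), and ensure the inequality holds against \emph{every} fraud strategy at once — partial circulation, delay, and strategically timing the bank's own exit via Protocol 6 before detection can trigger. Exhibiting a single pair of parameters that dominates the worst-case strategy, rather than handling one strategy at a time, is the delicate step; I would treat it by reducing the strategy family to its user-worst case and then solving the resulting scalar inequality for the fee and reserve.
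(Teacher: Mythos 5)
Your proposal follows essentially the same route as the paper: banks inherit tight security verbatim from Theorems \ref{thm:SQPM} and \ref{thm:QPMU}, and for nonbank users the paper likewise writes the bank's profit as fraud gain minus exactly the two penalties you identify (lost reserve $mtV$ and forgone reissue revenue $\alpha\tau cVu$), concluding $b = YV - (mtV + \alpha\tau cVu) < 0$ for suitable parameters --- though note the paper takes the user-chosen knobs to be the reissue frequency $\tau$ and fee coefficient $c$ rather than the reserve threshold $t$. The uniformity-over-fraud-strategies issue you flag is genuine but is not treated in the paper, whose argument is a single expected-value calculation for one nominal strategy.
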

\begin{proof}
The security between the banks is tight, and the proof is exactly the same as in Theorem \ref{thm:QPMU}.

We then show that the scheme also satisfies the relaxed security when used by nonbank users.
The scheme satisfies the conditions for when a not bank user receives quantum patchwork money from the banks.
We then show that the user can set the bank's profit $b$ to $b<0$ when the bank distributes copies.

To start, we show how users become aware of the distribution of copies.
Here, the users periodically reissue their own quantum patchwork money using Protocol 7.
If copies of the reissued quantum money are in circulation, the owner of the copies will be aware of the fraud since the reissue protocol will update the public key, which is also the case when the owner of the copies reissues them.
This means that the users can monitor the banks for fraud through the reissue protocol.
However, the fact that users can detect bank fraud is not enough to prevent the circulation of bank copies.
Therefore, we next show an incentive for banks to keep copies out of circulation by allowing users to make costly reissues.

Let $t$ be the threshold amount of quantum patchwork money the bank requires in Protocols 5 and 6, and let $V$ be the value of one quantum patchwork money.
Then the total value of the quantum money held by all the banks is $mtV$.
In addition, let $r$ be the bank's revenue obtained by distributing copies.
Then, we estimate the expected revenue that the bank would have earned if it had not committed fraud.
Let $\tau$ be the frequency per day with which a user reissues one of the quantum patchwork money, let $u$ be the total amount of all the users' quantum patchwork money, and let the reissue cost in protocol 7 be $C = cV$ where $V$ is the value  of the quantum patchwork money, and $c$ is a constant coefficient.
At this point, the daily reissue cost paid by the user to the bank is $\tau cVu$.
The expected revenue that the bank will earn from the reissue protocol can be  estimated to be $\alpha\tau cVu$ using a constant coefficient $\alpha$.

We then estimate the profit the bank will make by circulating the copies, while we estimate the number of copies that can be distributed before the users become aware of it.
Let $Y$ be the number of copies the banks can give to the users per day.
Since the probability of a reissue for a random copy on a given day is $1/\tau$, the expected value of the number of days until the circulation of the copies is exposed is $1-(1-1/\tau)^Y$.
If $Y$ is large enough, the circulation of copies will be exposed within one day, meaning the profit the banks can make from copying is $YV$.

Finally, the difference between the profit and loss the banks make on the circulation of the copies can be estimated.
Based on the aforementioned assumption, if users learn that a copy is being distributed, $V=0$, meaning $mtV=0$.
In other words, the sum of the number of assets and revenue the banks lose by circulating copies will be $mtV + \alpha\tau cVu$.
Therefore, the difference between the bank's profit and loss is $b=YV - (mtV + \alpha\tau cVu)$.
The parameters that the users can determine are the frequency of reissue protocol $\tau$ and constant coefficient $c$ of the cost.
When $(Y-mt)/\alpha u<\tau c$, the scheme satisfies the relaxed security conditions.
If $mt>Y$, then the cost for the reissue protocol is zero.
\end{proof}

\section*{Acknowledgment}
We would like to thank Takayuki Miyadera for many helpful comments.
This work was supported by JST SPRING, Grant Number JPMJSP2110.

\end{document}